\theoremstyle{definition}
\newtheorem{definition}{Definition}
\newtheorem{Property}{Property}
\newtheorem{theorem}{Theorem}
\newtheorem{corollary}[theorem]{Corollary}
\newtheorem{lemma}[theorem]{Lemma}
\newcommand{\BIGLR}[3]{{\left#1#3\right#2}}
\newcommand{\BIGP}[1]{{\BIGLR{(}{)}{#1}}}
\begin{document}
\begin{CJK*}{UTF8}{bsmi}

\title{\textbf{Optimal Construction for Time-Convex Hull with Two Orthogonal Highways in the $L_1$-metric}}

\author{Jyun-Yu Chen\textsuperscript{1}  , Po-Hsuan Chen\textsuperscript{2} \\
\textsuperscript{1} Department of Computer Science and Information Engineering \\
National Chung Cheng University, Chiayi, Taiwan\\
tchcjy@gmail.com\\
\textsuperscript{2} Department of Computer Science and Information Engineering\\
National Chung Cheng University, Chiayi, Taiwan\\
ramusa19@alum.ccu.edu.tw \\
}

\date{}

\maketitle

\begin{abstract}
	
\hspace{4mm}We consider the time-convex hull problem in the presence of two orthogonal highways \textbf{H}.
In this problem, the travelling speed on the highway is faster than off the highway, and the time-convex hull of a point set \textbf{P} is the closure of \textbf{P} with
respect to the inclusion of shortest time-paths.

	In this paper, we provide the algorithm for constructing the time-convex hull with two orthogonal highways. We reach the optimal result of $O(n\log{n})$ time for arbitrary highway speed in the L1-metric. For the L2-metric with infinite highway speed, we hit the goal of $O(n\log{n})$ time as well.

\end{abstract}
%


\section{Introduction}

\hspace{4mm} Path planing of a transition network has been an important problem in recent years. In a complex transition network, lots of models have been provided to solve traffic issues. Highway is one of a simple model in it. We assume that we can enter or exit the highway at any point, and the moving speed on the highway is $v>1$, while the speed off the highway is $v=1$. Due to the highway, the path of two points using highway may be faster than their straight-line path. Therefore, we care about the minimum travelling time of a path rather than its distance. To replace the definition of distance, we make use of the measure named \textbf{time-distance}. The time-distance of two points is defined to be the minimum travelling time from one point to the other one. 
	
	The convex hull in the presence of a highway was introduced by Hurtado et al \cite{Hurtado}. A set $S$ is convex if it contains the shortest time-path between any two point in $S$. We define the time-convex hull of a set $S$, which is a closure of $S$ with respect to the inclusion of shortest time-paths.\bigskip
		
	In previous works, the convex hull in the presence of highway was first solved by Palop \cite{Palop}. Palop has showed that, the convex hull in the presence of highway is composed of convex polygons(clusters) with segments of highway connecting all the components. According to the definition of clusters, the shortest time-path of any two points which belongs to different clusters must use the highway. Therefore, we can comprehend the degree of convenience in the transportation network from the point density of each clusters.\bigskip 
	
	Palop \cite{Palop} provided a $\Theta({n^{2}})$ time algorithm for constructing the convex hull in the presence of a highway by enumerating the shortest time-path between two points. Then, Yu and Lee \cite{Yu} studied the problem and provided an $O(n\log{n})$ approach based on incremental point insertions. However, the algorithm which they proposed does not return the correct answer in all circumstances. Some critical cases which make the problem more tough were overlooked. After that, Aloupis et al. \cite{Aloupis} proposed the algorithm that takes $O(n\log^{2}{n})$ time for $L_2$-metric and takes $O(n\log{n})$ time for $L_1$-metric. They solved the cases which Yu and Lee \cite{Yu} overlooked. Dai et al. \cite{Dai} provide the $O(n\log{n})$ algorithm for $L_p$-metric, where 1 $\leq p \leq \infty$, and they reduce the cluster-merging step which was proposed by Aloupis et al. \cite{Aloupis} to a geometric query.

\begin{flushleft}
 \textbf{Our focus and contribution}
\end{flushleft}

	We provide the algorithm for computing time-convex hull with two orthogonal highways in two different distance metrics. First, we provide the algorithm for $L_1$-metric with arbitrary highway speed and compute it in $O(n\log{n})$ time. In the algorithm which is given by Aloupis et al. \cite{Aloupis}, they proposed the approach to constructing the time convex hull with one highway. Considering that there are two orthogonal highways in our model, we divide the input point set into two parts and use the approach in each part. However, after completing this work, the clusters may need to be merged between two parts. Therefore, we use the approach given by Aloupis et al. \cite{Aloupis} as a subroutine in our algorithm and apply the data structure proposed by Mitchell \cite{Mitchell} to merge the clusters between two parts. All together the above processes generate our $O(n\log{n})$-algorithm for $L_1$-metric.
	
	Second, we provide the $O(n\log{n})$-algorithm for the special case in $L_2$-metric where the highway speed is infinite. Our $O(n\log{n})$-algorithm for $L_2$-metric with infinite highway speed include the following steps: Like what we do in $L_1$-metric, we divide the input point set into two parts and construct clusters with the algorithm proposed by Dai et al. \cite{Dai} in each part. Dai et al. use the symmetric property of cluster-merging condition and reduce the time-convex hull problem to the geometric query. Thus, the data structure proposed by Mitchell \cite{Mitchell} could be used to answer this particular geometric problem. Finally, after clusters are constructed, we use the ray-shooting method to check if any point should be merged with other cluster.

\section{Notation and definition}
\hspace{4mm} In this section, we will present some definitions that are useful for the following sections.

\begin{definition}[$L_p$-metric] \label{Lp-metric}
for any two points a($a_1$,$a_2$,...,$a_n$), b($b_1$,$b_2$,...,$b_n$) under $L_p$-metric, where p $\in$ $Z^{+}$, their distance is defined as $\sqrt[p]{\sum_{i=1}^{n}|a_i - b_i|^p}$.
\end{definition}

	When the metric are different, the distance of two points will also be different. Therefore, when we are solving the problems about distance, it's important to make sure that which metric it is.

\begin{definition}[Highway] \label{Highway}
The travelling speed $V_H$ on a highway \emph{H} in $\mathbb{R}^{2}$ is faster than 1, while the travelling speed without highways is 1. In this paper, there are two highways intersect perpendicularly. 
\end{definition}

	Without loss of generality, we position highways as x-axis and y-axis, which are denoted $H_x$ and $H_y$. Furthermore, because the properties of every quadrants are the same, we could concern only the first quadrant.

\begin{definition}[Side] \label{Side}
Split the space into two pieces by the function$:x=y$. The one closer to $H_x$ is called $Side-H_x$, and the one closer to $H_y$ is called $Side-H_y$.
\end{definition}

\begin{figure}
\begin{center}
	\includegraphics[scale=0.25]{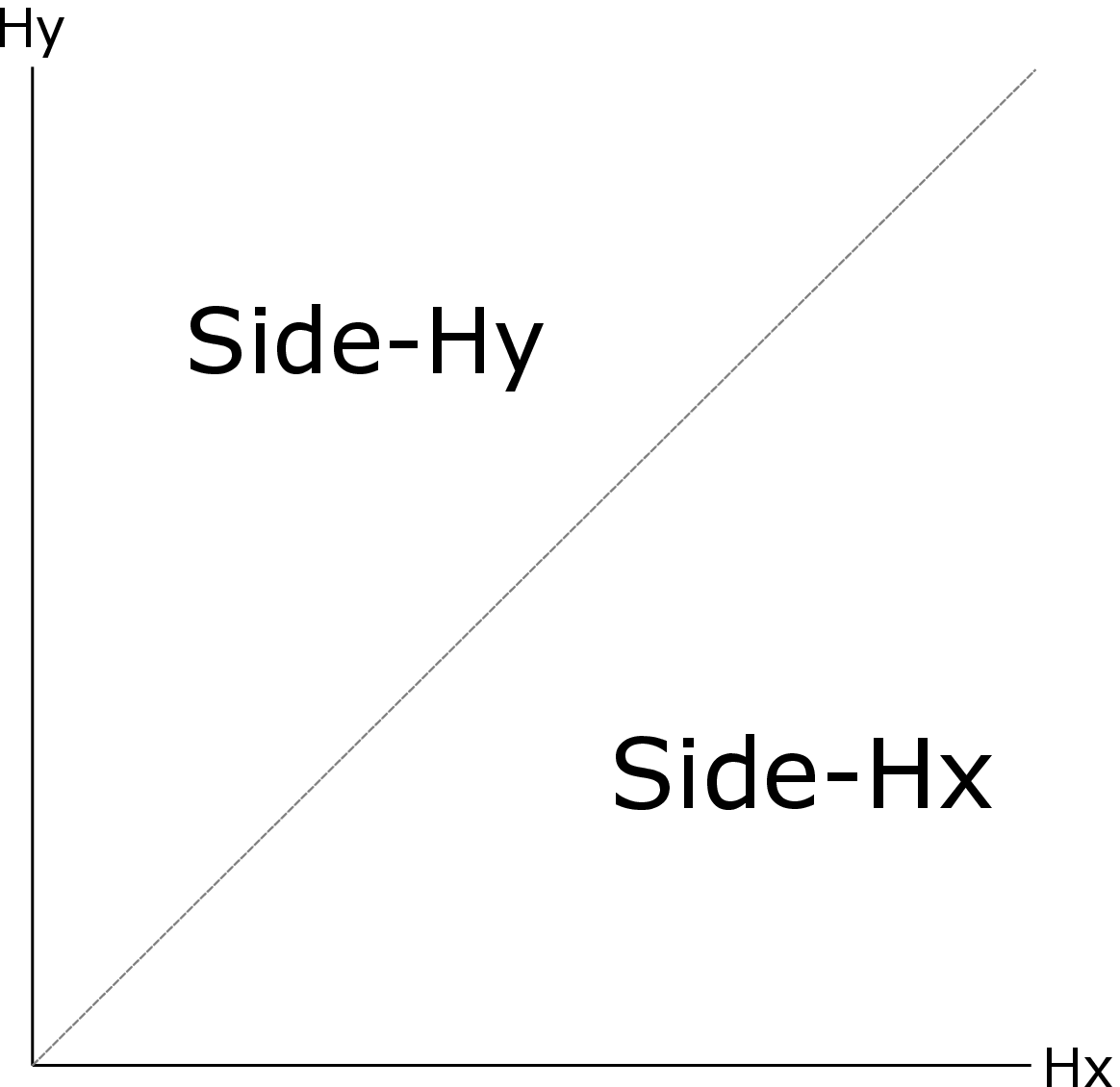} \label{fig:side}
	\caption{Sides}
\end{center}
\end{figure}

	With highways, the traffic time between points may less than not using highways. To find the fastest path, we should check that how much time it takes instead of how far it is. In other word, the measure of a path is not length but time, which is called \textbf{time distance}.

\begin{definition}[Time distance] \label{Time distance}
For any two points p and q, their time distance is the travelling time that from one point to the other one. The time distance of p, q is denoted d(p, q). \bigskip

	For any two points $p$ and $q$, let STP($p$,$q$) be the set of their shortest time-paths, which may either use highway or not. If STP(p,q) $\cap$ $H$ $\neq \varnothing $, then the shortest time-path of p,q use highway.
\end{definition}

	However, the shortest time-path of two points under $L_1$-metric is not unique. Thus, we define the shortest time-path between any two points to be the path closest to the highway. In other word, the shortest time-path in $L_1$-metric would be "L shape"(See Fig. \ref{fig:L_shape}).

\begin{figure}[H] 
\begin{center}
	\includegraphics[scale=0.6]{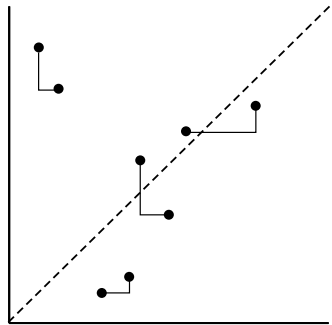}
	\caption{Shortest path under $L_1$-metric is L shape} \label{fig:L_shape}
\end{center}
\end{figure}

\begin{definition}[Shortest time-path in $L_1$-metric] \label{STP L1}
	Any two points not using highway will choose the path which is the closest to the highway.
\end{definition}

\textit{Convex hull and Time-convex hull} : In general definition, the convex hull of a point set Q includes all the shortest paths of any pair in the point set Q. In our model, considering time distance, a point set Q is said to be time-convex if it contains all shortest time-paths of every pairs in Q.\bigskip

\begin{definition}[Time-convex hull] \label{TCH}
The time-convex hull of a point set P, denoted TCH(P), is a minimum time-convex set of any pairs in P. In other word, for any two points $p_1, p_2 \in P$, STP($p_1, p_2$) would located in TCH(P).
\end{definition}

The time-convex hull is composed of two parts : two orthogonal straight-line highways, and the set of convex connected components which is denoted by clusters. To be brief, the polyhedrons of the time-convex hull of the point set Q.

\begin{definition}[Cluster] \label{cluster}
For a point set P, clusters are defined to be the connected components of TCH(P) $\backslash$ P.
\end{definition}

\begin{figure*}
\begin{center}
	\includegraphics[scale=0.7]{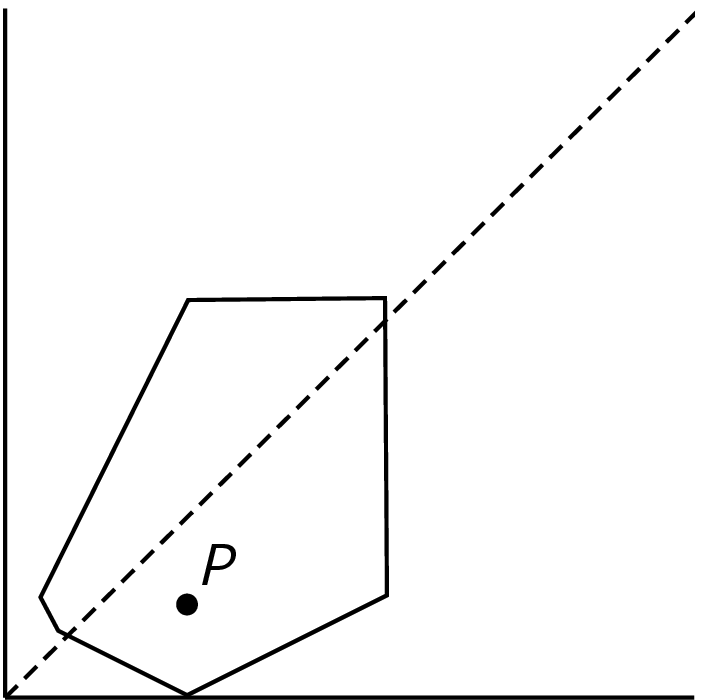}
	\includegraphics[scale=0.7]{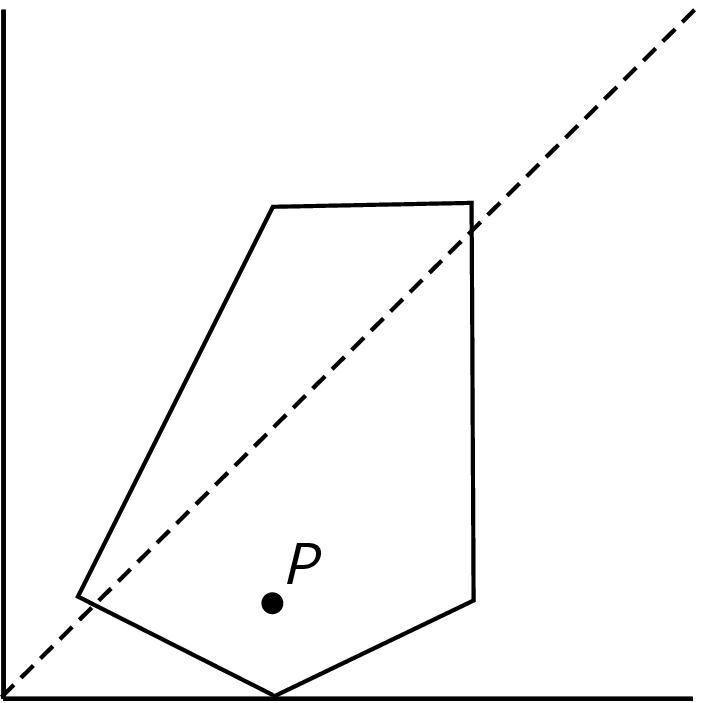} 
	\includegraphics[scale=0.7]{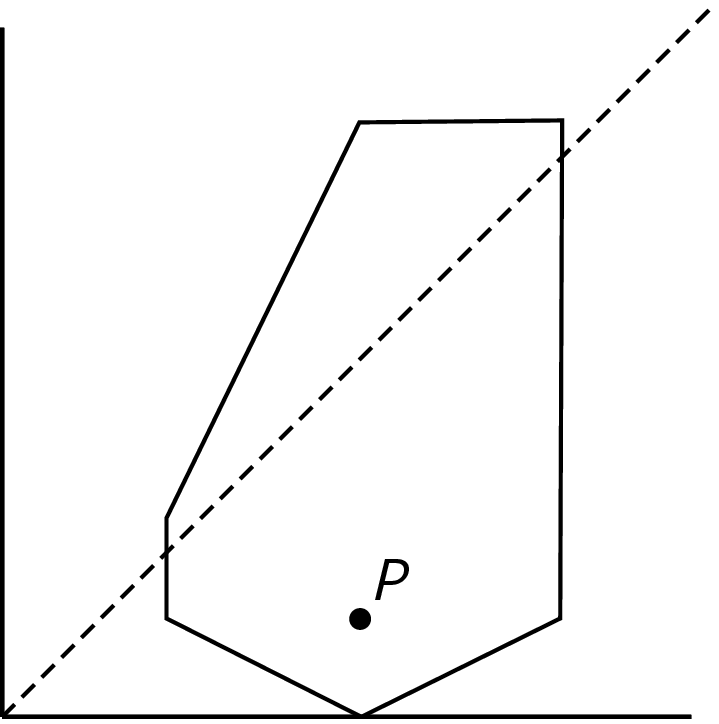}
	\caption{Walking regions under $L_1$-metric} \label{fig:WR L1}
\end{center}
\end{figure*}

\section{Walking Region}

\hspace{4mm} For any two points p, q, if the shortest time-path of p, q not use highway, or we can say that, q lies in the \textbf{walking region} of p.

\begin{definition}[Walking Region] \label{WR}
The walking region of a point p, denoted WR(p), is the set of points q such that STP(p, q) does not use highway $H$.$\lbrace q\in\mathbb{R}^{2} :|pq| \leq  STP(p,q)\cap H \neq \varnothing  \rbrace$ 


\end{definition}

	In the research of Aloupis et al. \cite{Aloupis}, they proposed the boundary of walking region of points with one highway. According to definition \ref{WR}, we know that the walking region boundary is yielded by comparing the shortest path using highway with the shortest path without highway. In our model, because there are two orthogonal highways, the point $p$ can enter from $H_x$ or $H_y$ then out of $H_x$ or $H_y$. Thus, there are 4 kinds of the shortest paths using highway (denoted highway path). We focus on first quadrant, we consider only the points on $Side-H_x$ since the walking regions of the points on $Side-H_y$ are symmetric with respect to $x=y$. For any point $p$ $\epsilon$ S, let $p'_x$ be its orthogonal projection onto $H_x$, $p'_y$ be its orthogonal projection onto $H_y$, and the intersection point of two highways $O$. The shortest paths between two points $p$ and $q$ consists of either:\\
\begin{compactenum}
	\item[-] The $L_1$ distance of $p$ and $q$.
	\item[-] The horizontal segment $p'_x$ $q'_x$ and the vertical segments $pp'_x$ and $qq'_x$. 
	\item[-] The horizontal segments $Op'_x$ and $qq'_y$ and the vertical segments $pp'_x$ and $Oq'_y$.
	\item[-] The horizontal segments $pp'_y$ and $qq'_y$ and the vertical segment $p'_y$ $q'_y$.
	\item[-] The horizontal segments $pp'_y$ and $Oq'_y$ and the vertical segments $Op'_y$ and $qq'_x$.
\end{compactenum}

	The walking region of $a$ is the set $\lbrace b\in\mathbb{R}^{2}$ : $L_1$ distance of a and b $ \leq $ every kinds of highway paths$\rbrace$.

\begin{Property}[Walking Region in $L_1$] \label{WR_L1} (See \cite{Aloupis}, \cite{Palop})
For any point q = ($x_q$, $y_q$) with $x_q,y_q \geq 0$ , $x_q \geq y_q$ , we have the following properties : 
\end{Property}
 In the $L_1$-metric, there are 3 kinds of shape (See figure \ref{fig:WR L1}) related to the x-coordinate and y-coordinate of $q$. The walking region of the point $q$ is the intersection of four regions :
\begin{compactenum}
	\item
		 two segments with slope $\dfrac{(1-1/V_H)}{2}$ and $-\dfrac{(1-1/V_H)}{2}$ intersect at $(x_q,0)$ and connect the half-line at horizontal line through $q$.
		

	\item
		 two segments with slope $\dfrac{(1-1/V_H)}{2}$ and $-\dfrac{(1-1/V_H)}{2}$ intersect at $(x_q ,\dfrac{(x_q - y_q)}{2} + \dfrac{(x_q + y_q)}{2V_H} )$ and connect the half-line at horizontal line through $q$.

	\item
		two segments with slope $\dfrac{2}{(1-1/V_H)}$ and $-\dfrac{2}{(1-1/V_H)}$ which meet at $(0, y_q)$ and join the half-line at vertical line through $q$.
		
	\item
		two segments with slope $\dfrac{2}{(1-1/V_H)}$ and $-\dfrac{2}{(1-1/V_H)}$ which meet at $((x_q-y_q)/2 + (y_q - x_q)/ 2V_H, y_q)$ and join the half-line at vertical line through $q$.\bigskip
\end{compactenum}

\begin{proof}
	For any point $p$, the 4 kinds of regions are the boundary of the set $\lbrace q\in\mathbb{R}^{2} :|pq| \leq $ 4 kinds of highway path $\rbrace$. For example, the first region is the set $\lbrace q\in\mathbb{R}^{2} :|pq| \leq |pp_x'| + |p_x'q_x'|/v + |qq_x'| \rbrace$. In other word, for any point $q$ in this region, the shortest path between $p$ and $q$ which does not use the highway is faster than using this highway path. Therefore, for the point $q$ which is in the intersection of this four regions, the shortest time-path between $p$ and $q$ do not use the highway.

\end{proof}

\begin{Property}[Walking region in $L_2$] \label{WR_L2} (See \cite{Dai}, \cite{Palop})
For any point q = ($x_q$, $y_q$) with $x_q,y_q \geq 0$ , $x_q \geq y_q$ , we have the following properties :
\end{Property}
In the $L_2$-metric with infinite highway speed, the boundary of WR($p$) is the intersection of two regions :
\begin{compactenum}
	\item
		The first region is characterized by the following two parabolas :\\
		(a) \emph{right discriminating parabola}, which is the curve satisfying 
				\begin{align*}
				\begin{cases}
					x\ge x_q + y_q\tan\alpha, \quad \text{and} \\
					\sqrt{\BIGP{x-x_q}^2+\BIGP{y-y_q}^2} = y_q\sec\alpha + y\sec\alpha %
				\end{cases}
				\end{align*}
		(b) The \emph{left discriminating parabola} 
		which is symmetric to the right discriminating parabola with respect to the line $x=x_q$. 

	\item
		The second region is characterized by the following two parabolas :\\
		(a) \emph{right discriminating parabola}, which is the curve satisfying 
				\begin{align*}
				\begin{cases}
					y\ge y_q - x_q\tan\alpha, \quad \text{and} \\
					\sqrt{\BIGP{x-x_q}^2+\BIGP{y-y_q}^2} = y_q\sec\alpha + x\sec\alpha %
				\end{cases}
				\end{align*}
		(b) The \emph{left discriminating parabola} 
		which is symmetric to the right discriminating parabola with respect to the line $x=x_q$. 
\end{compactenum}

\begin{figure}[H]
\begin{center}
	\includegraphics[scale=0.3]{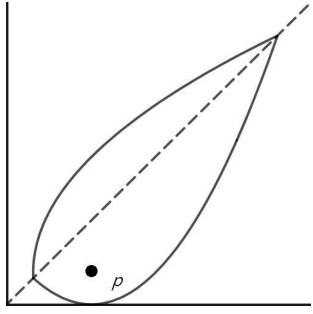} 
	\caption{Walking regions under $L_2$-metric with infinite $V_H$} \label{fig:WR L2}
\end{center}
\end{figure}

\begin{proof}
	It is similar to the case in $L_1$-metric. In $L_2$-metric, the shortest path of  the point in $Side-H_x$ always enter the highway $H_x$ because of the infinite highway speed. So, there are just 2 kinds of highway paths and regions.

\end{proof}

\section{Optimal Construction in $L_1$-metric}
\hspace{4mm} In our model, there exist two orthogonal highways and a point set. The two orthogonal highways could be regard as x-axis and y-axis.

	Image that there is another line: $x=y$ in the model. By $x=y$, the point set could be separated into two part, $x < y$ and $x \geq y$, which is the same to definition of $side$(definition \ref{Side}). Thus, each point can be assigned to one side.
	
	According to the definition \ref{Side}, we can infer that if two points on the same side using highway, they must use the highway on their side. It's easy to prove the correctness since the closet highway of the two points are both the one on their side. Therefore, at most one highway would be used by these two points.

\begin{lemma} \label{the same side use one H}
 	Any two points on the same side use at most one highway.
\end{lemma}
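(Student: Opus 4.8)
The plan is to rule out, for two points on the same side, the two ``highway paths'' that travel along \emph{both} highways, by exhibiting a path along the single highway on their side that is never slower; this shows some shortest time-path uses at most one highway. By the quadrant symmetry remarked after Definition~\ref{Side} together with the reflection across $x=y$, it suffices to take $p=(x_p,y_p)$ and $q=(x_q,y_q)$ both on $Side$-$H_x$, that is $x_p\ge y_p\ge 0$ and $x_q\ge y_q\ge 0$. Of the highway paths enumerated just before Property~\ref{WR_L1}, the straight $L_1$ path uses no highway and the two paths that stay on a single axis already use one highway, so only the two paths routing through the crossing point $O$ --- the one entering $H_x$ at $p'_x$, running to $O$, and leaving along $H_y$ to $q$, and its mirror entering $H_y$ and leaving along $H_x$ --- need to be eliminated.

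The crux is a one-point sub-claim: for any $r=(x_r,y_r)$ with $x_r\ge y_r$, the route from $O$ to $r$ along $H_x$ is at least as fast as the route along $H_y$. Their costs are $y_r+x_r/V_H$ (down $H_x$ then up to $r$) and $x_r+y_r/V_H$ (up $H_y$ then across to $r$), and the second minus the first equals $(x_r-y_r)(1-1/V_H)$, which is nonnegative exactly because $x_r\ge y_r$ (the same-side hypothesis) and $V_H>1$. This is the only place where both the side condition and the highway speed enter; the rest is bookkeeping.

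Granting the sub-claim, I would finish by rewriting each two-highway path into the single-$H_x$ path. In the first path the leg from $O$ to $q$ runs along $H_y$; applying the sub-claim to $q$ replaces it by the $O$-to-$q$ leg along $H_x$ without increasing cost, producing the path $p\to p'_x\to O\to q'_x\to q$ that lies entirely on $H_x$. Shortening its highway portion $p'_x\to O\to q'_x$ to the direct segment $p'_x q'_x$ can only help, since $|x_p-x_q|\le x_p+x_q$, leaving the single-$H_x$ path. For the mirror path the $O$-to-$p$ leg runs along $H_y$, and the sub-claim applied to $p$ together with the same shortening again yields a single-$H_x$ path. Hence neither two-highway path is faster than a single-$H_x$ path, so a shortest time-path between $p$ and $q$ uses at most $H_x$; reflecting across $x=y$ gives the claim on $Side$-$H_y$.

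I expect the only real obstacle to be the sign verification in the sub-claim and the validity of the highway-segment shortening for every relative ordering of $x_p$ and $x_q$ (the case $x_p<x_q$ looking least favorable at first glance); once the factor $(x_r-y_r)(1-1/V_H)\ge 0$ is confirmed, the reduction from a two-highway path to a one-highway path is immediate and needs no further case split.
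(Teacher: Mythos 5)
Your proof is correct, but it is not the argument the paper gives, and it is considerably more substantive. The paper's own proof is a two-line qualitative assertion: since both points lie on $Side$-$H_k$, the closest highway to each is $H_k$, hence any highway use amounts to ``enter $H_k$, leave $H_k$''; no paths are enumerated and no costs are compared, so it is essentially an appeal to proximity intuition that comes close to restating the lemma. You instead substantiate the claim by an explicit exchange argument: among the five candidate shortest paths listed before Property~\ref{WR_L1}, only the two routed through the crossing $O$ use both highways, and each is dominated by a single-$H_x$ path. Your key computation is right --- the cost of reaching $r$ from $O$ via $H_y$ minus the cost via $H_x$ is $(y_r/V_H + x_r) - (x_r/V_H + y_r) = (x_r-y_r)(1-1/V_H)\ge 0$ --- and the final shortening $|x_p-x_q|\le x_p+x_q$ collapses the detour through $O$ into the paper's second enumerated (one-highway) path. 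This makes precise exactly where the side hypothesis and $V_H>1$ enter, which the paper never does. What you give up is brevity; what you inherit is the paper's own unstated assumption that the five-path enumeration is exhaustive (i.e., that no shortest path enters and leaves a highway more than once). Note also that both you and the paper reduce to the first quadrant by symmetry, so pairs of same-side points lying in different quadrants are not formally covered by either argument; this is a shared gap, not one you introduced.
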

 	
\begin{proof}
Assume there are two point p, q on side-$H_{k}$ (k can be x or y). Because p and q are on side-$H_{k}$, the closet highway to them are both $H_{k}$. So, if p to q using highway, it need to "enter $H_{k}$, leave $H_{k}$", which use only one highway $H_{k}$.
\end{proof}

	By lemma \ref{the same side use one H}, for points on the same side, at most one highway would be used, which is the same to the research of previous works \cite{Aloupis, Dai}. By previous works, the time-convex hull for a given set \textbf{S} of n points in $L_p$-metric can be computed in $O(n\log{n})$ time using $O(n)$ space, where $1 \leq p \leq \infty$. Consequently, clusters could also be constructed in O($n\log{n}$).
	
\begin{corollary} \cite{Aloupis, Dai} \label{cluster nlogn}
	For a point set $S_k$ on $Side-H_k$, clusters could be constructed in O($n\log{n}$).
\end{corollary}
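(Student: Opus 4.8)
The plan is to reduce the two-highway construction on a single side to the known single-highway construction, and then invoke the existing $O(n\log n)$ algorithms as a black box. First I would appeal to Lemma \ref{the same side use one H}: every pair of points in $S_k$ lies on $Side$-$H_k$, so any shortest time-path between them that uses a highway uses only $H_k$, and never the perpendicular highway $H_{\bar{k}}$. Consequently, when restricted to the point set $S_k$, the admissible highway paths collapse from the four kinds described before Property \ref{WR_L1} to the two kinds that enter and leave $H_k$ alone. This is exactly the input model treated by Aloupis et al. and Dai et al., in which a single highway is present.

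Next I would record that the walking regions and, more importantly, the cluster-merging conditions between pairs of points of $S_k$ are therefore computed with respect to $H_k$ only, so they coincide with those of the single-highway setting. Since Definition \ref{cluster} defines clusters as the connected components of $\mathrm{TCH}(S_k)\setminus S_k$, the clusters of $S_k$ are read off directly from the time-convex hull: they are the maximal convex components produced by the hull construction. Hence it suffices to compute $\mathrm{TCH}(S_k)$.

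I would then cite the black-box result of the previous works: for a set of $m$ points in the $L_p$-metric with a single highway, $\mathrm{TCH}$ can be computed in $O(m\log m)$ time and $O(m)$ space. Applying this to $S_k$ with $m=|S_k|\le n$ yields the clusters of $S_k$ in $O(|S_k|\log|S_k|)=O(n\log n)$ time, which is the claimed bound.

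The main obstacle is the first step: one must verify carefully that confining the points to a single side genuinely eliminates any dependence on the second highway, so that the single-highway algorithm applies without modification. Lemma \ref{the same side use one H} supplies precisely this, but the argument must ensure not merely that at most one highway is used, but that the highway used is always $H_k$; otherwise the walking-region boundaries of Property \ref{WR_L1} computed for the single-highway model would not match. Once this identification is established, the time bound is immediate from the cited algorithms and no new geometric analysis is needed here—the genuinely new difficulty, namely merging clusters across the two sides, is deferred to the subsequent sections.
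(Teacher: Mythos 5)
Your proposal is correct and follows essentially the same route as the paper: the paper justifies this corollary exactly by combining Lemma \ref{the same side use one H} (whose proof already establishes that points on $Side$-$H_k$ use specifically $H_k$, the refinement you rightly flag as necessary) with the black-box $O(n\log n)$ single-highway constructions of Aloupis et al.\ and Dai et al. Your added care in checking that the walking regions and merging conditions coincide with the single-highway model is a sound elaboration of what the paper leaves implicit, not a different argument.
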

	
	For a point set \textbf{S}, we separate all points into two sides. Points on $Side-H_x$ are sorted by their x-coordinates, and points on $Side-H_y$ are sorted by their y-coordinates. By lemma \ref{the same side use one H} and corollary \ref{cluster nlogn}, we could construct clusters on each side. 
	
	However, the walking regions of clusters we constructed on one side may contain some points on the other side, which means that we may need to merge clusters on different sides. \bigskip
	
	To merge the clusters on different sides, we record the boundary of walking region of each cluster. Take the points on $Side-H_y$ for example, if any point $p$ on $Side-H_y$ lies inside the walking region of any cluster $C_X$ on $Side-H_x$, we would mark $C_X$ and the cluster $C_Y$ that $p$ belongs to. The marked clusters would be merged into a new cluster in subsequent steps. 
	
	After all points on $Side-H_y$ have been checked, we execute the same process to $Side-H_x$ and mark the clusters we found. Finally, these marked clusters would be merged into a new cluster. The cluster merging of different sides is called \textit{\textbf{side inclusion}}.

\begin{figure*}
\begin{center}
	\includegraphics[scale=0.3]{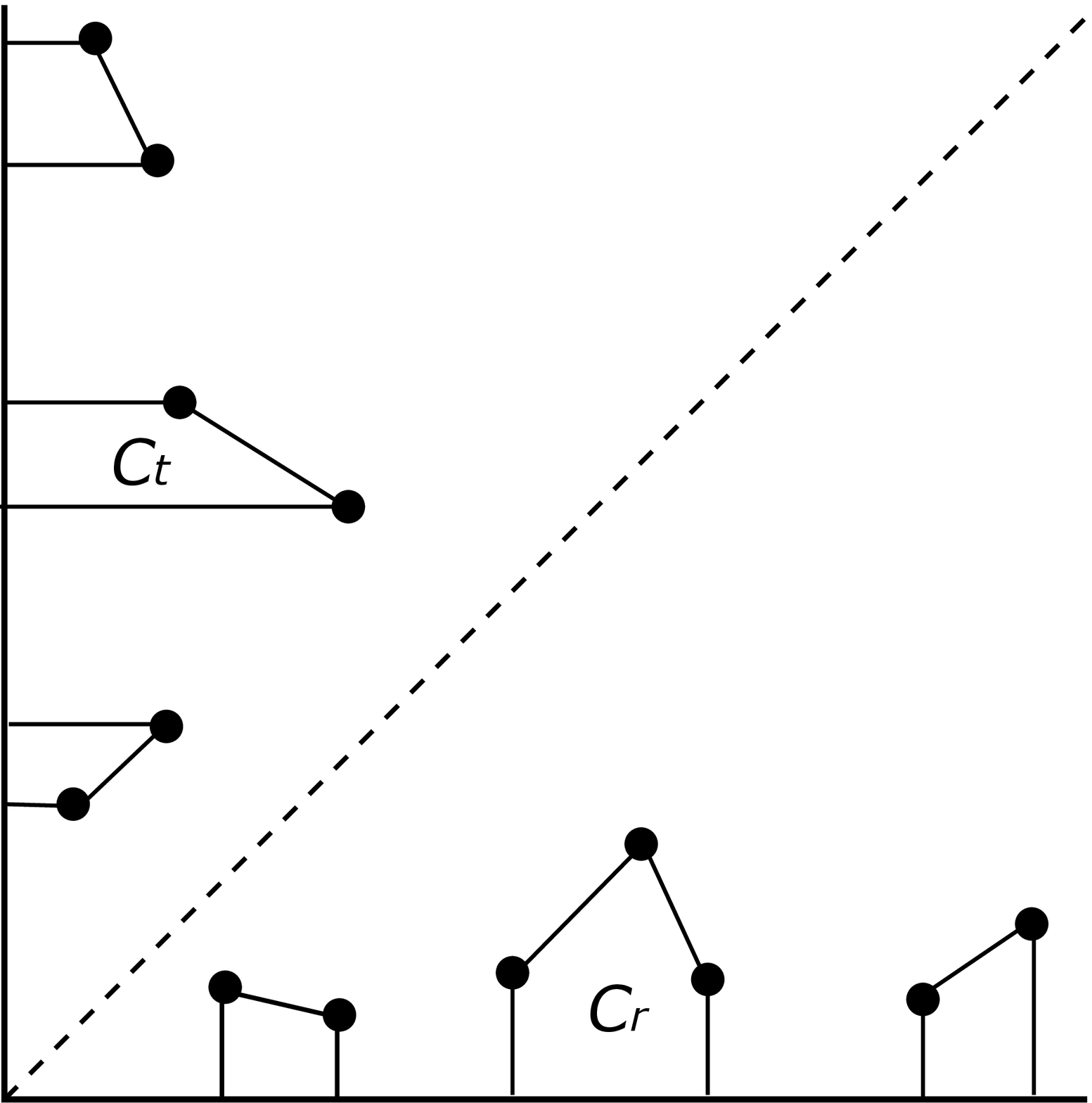} \hspace{10mm}
	\includegraphics[scale=0.3]{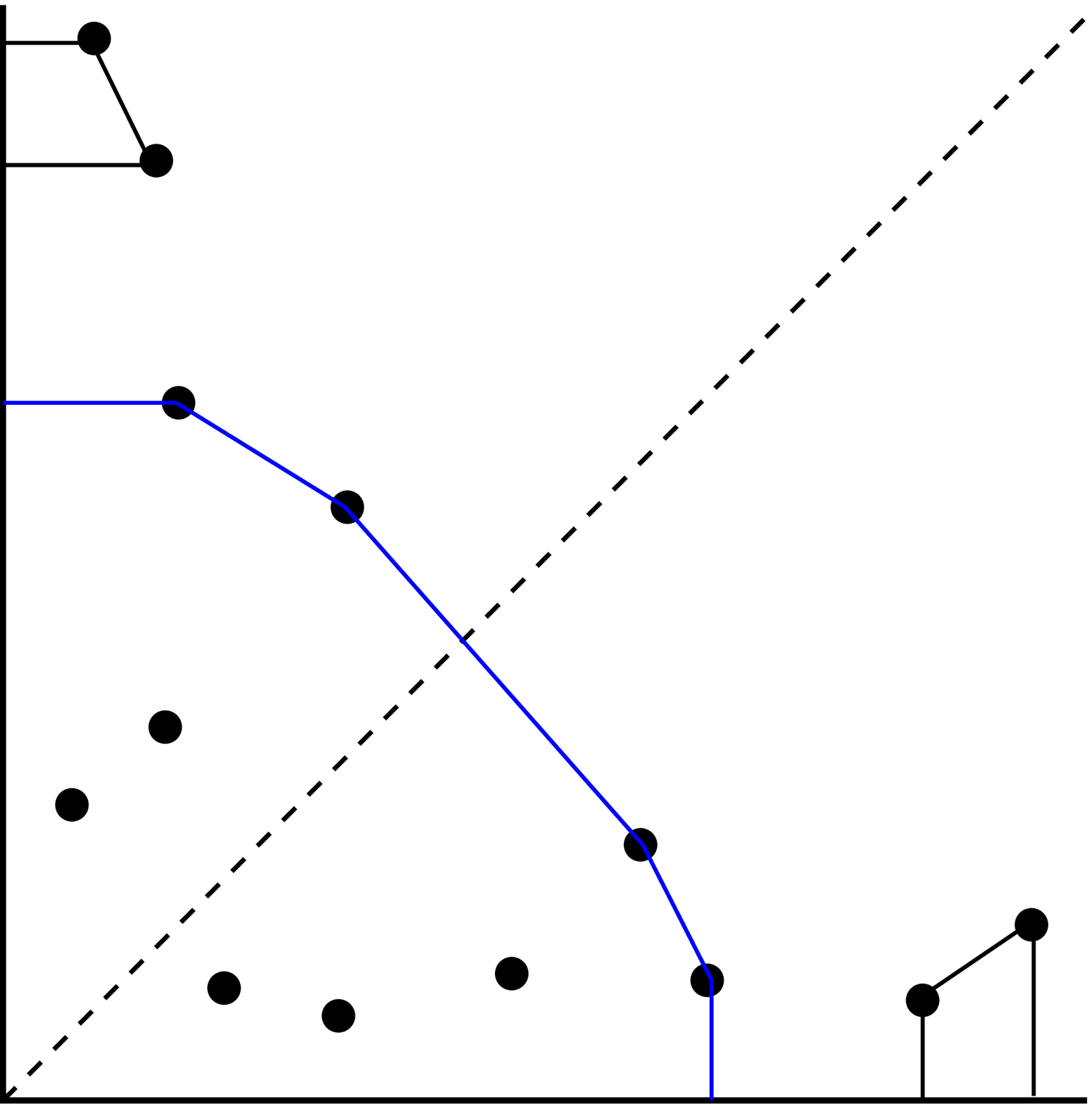} 
	\caption{Every marked clusters are located at bottom left of $\overline{C_t C_r}$}
\end{center}
\end{figure*}

\begin{lemma}[merge approach] \label{merge approach}
	Let $C_t$ be the topmost cluster on $Side-H_y$ and $C_r$ be the rightmost cluster on $Side-H_x$ we marked. Merging $C_t$ and $C_r$ will also merge every clusters on the left of $C_r$ or below of $C_t$.
\end{lemma}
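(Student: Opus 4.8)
The plan is to reduce the statement to one geometric containment fact and then use the convexity of the merge. Concretely, I would show that every marked cluster lies to the bottom-left of the segment $\overline{C_t C_r}$, and that the cluster produced by merging $C_t$ and $C_r$ --- whose cross-side shortest time-paths are routed along the two highways through the corner at the origin --- covers exactly this bottom-left region. Granting these two facts, any remaining marked cluster is contained in the new cluster and is therefore swept into it, which is precisely the claim.

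First I would extract the ordering that the definitions hand us for free. On $Side$-$H_x$ the clusters are sorted by $x$-coordinate and $C_r$ is the rightmost that got marked, so every other marked cluster on $Side$-$H_x$ lies to the left of $C_r$; symmetrically, $C_t$ being the topmost marked cluster on the $y$-sorted $Side$-$H_y$ puts every other marked cluster on $Side$-$H_y$ below $C_t$. By Definition \ref{Side} and Lemma \ref{the same side use one H}, the two families sit on opposite sides of the diagonal $x=y$, so ``left of $C_r$'' for one family and ``below $C_t$'' for the other jointly confine all marked clusters to the region bounded by the axes and the line through $\overline{C_t C_r}$.

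Next I would prove the containment itself. A cluster is marked only through a cross-side walking-region relation, so I would invoke Property \ref{WR_L1} and use the fixed boundary slopes $\pm(1-1/V_H)/2$ and $\pm 2/(1-1/V_H)$ there to control how far a walking region reaches across the diagonal. The aim is to show that a marked cluster cannot protrude above-right of $\overline{C_t C_r}$: if it did, then either it would be farther right than $C_r$ or higher than $C_t$ (contradicting their extremality), or its walking region would be too short to reach a partner on the opposite side and trigger a marking. For the merge itself, since $C_t$ and $C_r$ lie on opposite sides, the shortest time-path between any $a\in C_t$ and any $b\in C_r$ drops to $H_y$ and $H_x$ and passes the origin corner; the time-convex closure of this family of paths fills the bottom-left region rather than merely a tube around $\overline{C_t C_r}$, so it contains every cluster seated there.

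Putting the pieces together, each marked cluster other than $C_t$ and $C_r$ lies bottom-left of $\overline{C_t C_r}$ and hence inside the merged cluster, so it is merged as well --- exactly the assertion that merging $C_t$ and $C_r$ merges all clusters left of $C_r$ or below $C_t$. I expect the middle step to be the main obstacle: ruling out, from the precise walking-region shape of Property \ref{WR_L1}, any marked cluster that lies above-right of $\overline{C_t C_r}$, and verifying that the highway-routed closure of the $C_t$--$C_r$ merge genuinely covers the whole bottom-left region and not just a neighborhood of the connecting segment. The extremality of $C_t$ and $C_r$ should be the lever that eliminates the offending configurations.
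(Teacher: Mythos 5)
Your first step (confining all marked clusters to the bottom-left of $\overline{C_t C_r}$ via the extremality of $C_t$ and $C_r$ in the per-side orderings) is consistent with the paper's picture. But the proposal breaks down at exactly the step you flag as the main obstacle, and it does so for a structural reason, not a technical one: you route the shortest time-path between $a\in C_t$ and $b\in C_r$ ``along the two highways through the corner at the origin.'' This inverts the very condition that causes $C_t$ and $C_r$ to be marked for merging. Clusters are the connected components of the time-convex hull once the highway is removed, and two clusters on opposite sides must be merged precisely when some point of one lies in the walking region of the other, i.e.\ when the shortest time-path between them does \emph{not} use the highway (Definition \ref{WR}). If the connecting paths all passed through the origin along $H_x$ and $H_y$, then $C_t$ and $C_r$ would simply remain distinct clusters joined by highway segments and there would be nothing to merge; moreover, the time-convex closure of highway-routed paths is just the highways plus their endpoints, so it does not ``fill'' any bottom-left region, and the containment claim your argument rests on collapses.

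The paper's proof goes the opposite way. It takes $P_t\in C_t$ and $P_r\in C_r$ lying in each other's walking regions, so the shortest time-path $s$ between them avoids the highway; in the $L_1$-metric $s$ is the L-shaped path hugging the axes, and the merged cluster contains it. The key tool you are missing is the subpath argument: a subpath of a shortest time-path is again a shortest time-path, so if $s$ met some other cluster at a point $x$, then neither the piece from $P_t$ to $x$ nor the piece from $P_r$ to $x$ would use the highway, and hence $x$'s cluster would already have been merged with $C_t$'s or $C_r$'s cluster during the per-side construction --- a contradiction. It is this highway-free path $s$ sweeping past the bottom-left clusters, not a highway-routed closure, that forces every cluster to the left of $C_r$ or below $C_t$ into the merged component. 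To repair your write-up, replace the highway-routing claim with the mutual walking-region fact and the subpath lemma; the extremality step can stay as is.
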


\begin{proof}
	Let $P_t$ be a point of $C_t$ and $P_r$ be a point of $C_r$. $P_t$ and $P_r$ lie in the walking region of each other. Denote $s$ to be the shortest time-path between $P_t$ and $P_r$. Since $P_t$ and $P_r$ lie in the walking region of each other, $s$ doesn't use highway. Suppose that $s$ intersects another cluster at some point $x$. Because of the fact that a subpath of a shortest path is also a shortest path, neither the path $\overline{P_t x}$ nor the path $\overline{P_r x}$ use the highway. Thus, $x$ and one of $P_r$ or $P_t$ must be in the same cluster before merging, which is a contradiction to the assumption.
\end{proof}	

	By lemma \ref{merge approach}, the only marked clusters which need to be merged are the topmost cluster $C_t$ and the rightmost cluster $C_r$. Because merging $C_t$ and $C_r$ will also merge every clusters on the bottom-left of them, the new cluster after merging will contain all the other points before $C_t$ and $C_r$. Thus, all marked clusters have been merged and no more clusters would be merged again. We could end the side inclusion and construct the time-convex hull. 
	
	Although the structure of our algorithm has been introduced, we haven't explained how to find the clusters need to be marked.

	Our algorithm is based on the research that was proposed by Aloupis et al \cite{Aloupis}. In their research, construct a time-convex hull with one highway in the $L_1$-metric takes only $O(n\log{n})$. Therefore, by corollary \ref{cluster nlogn}, we could construct clusters on each side by the method of Aloupis et al in $O(n\log{n})$.

	Point inclusion is an approach to construct clusters with one highway, which was also proposed by Aloupis et al \cite{Aloupis}. Take $Side-H_x$ for example, we would sort the point set $S_x$ by their x-coordinates first, and check the walking region of each point one by one. The walking regions are saved as linked lists, which record the boundaries of the walking region for each point. If a new point $P_i$ lies in the walking region of $P_{i-k}$, then all points between $P_{i-k}$ and $P_i$ would be merged into a new cluster. After solving $P_i$, we will turns to do point inclusion to $P_{i+1}$, until every points in $S_x$ are solved. 
	
	The algorithm saves the boundaries of walking regions when constructing clusters, so we can check if any points on the other side lie on the walking region at the same time.
	
\begin{definition}[Segment-dragging query] \label{segment-dragging}
The segment-dragging query, denoted Q(L). Q(L) ask that, for any line segment L with finite slope and a point set P, if P$\cap L^{+}$ is empty or not, where $L^{+}$ is the half plane to the right of L or above L.

We call the query searching the half plane to the right of L \textbf{segment horizontal-dragging query}, while searching the half plane above L is called \textbf{segment vertical-dragging query}
\end{definition}

\begin{lemma} \label{complexity segment-dragging}
(Chazelle \cite{Chazelle} and Mitchell \cite{Mitchell}) Segment-dragging query takes O($n\log{n}$) constructed time, O($n$) space, O($\log{n}$) query time.
\end{lemma}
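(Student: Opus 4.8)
The plan is to reduce the segment-dragging emptiness query to a standard three-sided orthogonal range-emptiness query and then answer it with a priority search tree, whose performance matches the claimed bounds exactly. First I would fix the (finite) slope $\sigma$ of the supporting line of the query segments and apply an affine shear $T$ that maps this slope to a coordinate direction and the dragging direction to the other axis. The purpose of this normalization is that, under $T$, the region $L^{+}$ determined by dragging $L$ — bounded by the segment on one side and open in the dragging direction — becomes an axis-parallel region whose defining boundary consists of one straight edge together with the two half-lines emanating from the images of the endpoints of $L$; that is, a three-sided rectangle open toward infinity in the dragging direction.

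Next, in the preprocessing phase I would transform the whole point set once, storing $T(P)$, and observe that $P\cap L^{+}=\varnothing$ if and only if $T(P)\cap T(L^{+})=\varnothing$, since $T$ is a bijection. It therefore suffices to decide emptiness of a three-sided orthogonal range of the form $[a,b]\times[c,\infty)$ (or one of its reflections, depending on whether we perform the horizontal- or the vertical-dragging variant of Definition \ref{segment-dragging}). I would then build McCreight's priority search tree on $T(P)$: it occupies $O(n)$ space, is built in $O(n\log n)$ time (dominated by the initial sort), and answers a three-sided range-emptiness query — indeed the stronger query for the extreme point in the dragging direction — in $O(\log n)$ time. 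Because the walking-region boundaries in Properties \ref{WR_L1} and \ref{WR_L2} involve only a constant number of distinct slopes, I would build a constant number of such trees, one per orientation, leaving all three bounds unchanged.

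The step I expect to be the main obstacle is the careful bookkeeping at the two endpoints of $L$: because $L$ is a bounded segment rather than a full line, the swept region is genuinely three-sided, so I must verify that the shear sends the two endpoints to the correct pair of bounding half-lines, and that the finite-slope hypothesis rules out the degenerate vertical case that would make $T$ non-invertible. A secondary point to check is that the extreme-in-direction point returned by the priority search tree is exactly the first point a dragged segment would meet, so that the decision (emptiness) version is answered correctly; this is where I would have to reconcile the open/closed conventions of $L^{+}$ with the comparisons used inside the tree. Beyond these details the result is classical, and the stated complexities follow directly from the structures of Chazelle \cite{Chazelle} and Mitchell \cite{Mitchell}.
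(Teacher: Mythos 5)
The paper offers no actual proof of this lemma: it is imported as a black box from Chazelle \cite{Chazelle} and Mitchell \cite{Mitchell}. So there is no ``paper proof'' for your argument to match, and your shear-plus-priority-search-tree construction is a reasonable elementary substitute as far as it goes: for a \emph{fixed} orientation, shearing that orientation to an axis and building McCreight's priority search tree does answer the emptiness version of the query in $O(\log n)$ time with $O(n)$ space after $O(n\log n)$ preprocessing, and only the emptiness (decision) version is demanded by Definition \ref{segment-dragging}. This is in fact simpler than Chazelle's structure, which solves the harder first-point-hit problem.

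The genuine gap is in the step where you dispose of the slope dependence. Your structure can only serve query slopes that are finite in number and known \emph{before} preprocessing, since the shear $T$ must be fixed when the tree is built. You justify this restriction by claiming that the walking-region boundaries in Properties \ref{WR_L1} and \ref{WR_L2} involve only constantly many slopes. That claim fails twice. First, the boundaries in Property \ref{WR_L2} are parabolas, not line segments, so ``constantly many slopes'' is not even meaningful there. Second, and more seriously, Definition \ref{segment-dragging} quantifies over \emph{any} segment of finite slope, and the paper actually invokes the lemma on such segments: in the edge-inclusion tests (in both the $L_1$ and the $L_2$ sections) the query segment comes from a newly formed cluster hull edge, and these edges are created online during cluster merging and can have arbitrary slope. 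For those queries you cannot have prepared ``one tree per orientation'' in advance, and no single shear fixed at preprocessing time converts them into axis-parallel three-sided ranges; a slab-restricted half-plane emptiness query for an orientation revealed only at query time is a genuinely harder problem, and your data structure does not answer it within the stated bounds. So your proof establishes a strictly weaker statement than the lemma as stated and as used. (One can fairly object that the lemma itself over-claims relative to the cited results, which are also fixed-orientation structures, but that is a defect of the paper's statement, not something your argument repairs.)
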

	
	By property \ref{WR_L1}, for the incoming point $p_i$ on $Side-H_x$, its left boundaries of walking region consist of at most three segments on $Side-H_y$ (See fig. \ref{fig:WR L1}), while the right boundary consist only one vertical segment on $Side-H_y$. Therefore, we extend the right boundary to a vertical line $R$, and invoke the segment horizontal-dragging query for each left boundaries to $R$. If the query return yes, it means that the cluster of $p_i$ must be merged with some clusters on the other side, so the cluster $p_i$ belongs to would be marked. 
	
\begin{lemma} \label{at most 3 segment-dragging}
	Each point evokes at most 3 segment-dragging queries.
\end{lemma}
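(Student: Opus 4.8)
The plan is to reduce the statement to a purely combinatorial count: by the construction described just before the lemma, the number of segment-dragging queries that a point $p_i$ triggers equals the number of maximal segments that make up the \emph{left} boundary of $WR(p_i)$ restricted to $Side-H_y$ (we issue one horizontal-dragging query per such segment, while the right boundary is merely extended to a line and is not itself queried). So it suffices to show that this left boundary has at most three segments. The starting point is Property \ref{WR_L1}: for $p_i$ on $Side-H_x$, $WR(p_i)$ is the intersection of the four wedge regions listed there, whose bounding segments carry only two slope magnitudes — the shallow $\pm(1-1/V_H)/2$ coming from regions~1 and~2 (apexes on the vertical line $x=x_{p_i}$) and the steep $\pm 2/(1-1/V_H)$ coming from regions~3 and~4 (apexes on the horizontal line $y=y_{p_i}$). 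Intersecting the four wedges produces the three global shapes drawn in Figure \ref{fig:WR L1}, and I would argue case by case over these three shapes.

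Next I would split $\partial WR(p_i)\cap Side-H_y$ into its right part and its left part. Since $p_i$ lies on $Side-H_x$ (hence below the line $x=y$) while the portion of $WR(p_i)$ that crosses into $Side-H_y$ sits up and to the left of $p_i$, the right side of that portion is bounded by the single vertical half-line through $p_i$, giving the lone vertical segment forming the right boundary. The remaining boundary — the left boundary — is a chain built from the shallow and steep wedge segments above. The core step is to verify, in each of the three shapes of Figure \ref{fig:WR L1}, that once one clips away the parts lying in $Side-H_x$ (below $x=y$) and the parts cut off by the vertical right boundary, this chain consists of at most three maximal segments.

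I would then close the loop with the query mechanism. Extend the vertical right boundary to the full vertical line $R$, and for each maximal left-boundary segment issue one segment horizontal-dragging query toward $R$ in the sense of Definition \ref{segment-dragging}. A point of $Side-H_y$ lies in $WR(p_i)$ precisely when it lies to the right of some left-boundary segment and to the left of $R$, so the union of the query regions over the left-boundary segments covers $WR(p_i)\cap Side-H_y$ exactly. Consequently the number of queries equals the number of left-boundary segments, which by the previous step is at most three, establishing the lemma.

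The main obstacle I expect is the middle step: the case analysis that bounds the left chain by three. One must track which of the eight candidate wedge-boundary segments actually survive on the left boundary of the fourfold intersection after it is clipped to $Side-H_y$, and check that the shallow segments (from regions~1 and~2) and the steep segments (from regions~3 and~4) never combine into more than three maximal pieces in any of the three shapes. The delicate part will be the degenerate transitions between the three shapes of Figure \ref{fig:WR L1}, where a segment can shrink to a point or two segments can become collinear, so care is needed to confirm the count does not secretly jump to four at those boundary configurations.
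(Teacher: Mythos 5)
Your proposal takes essentially the same route as the paper: the paper gives no separate proof of this lemma, justifying it entirely by the construction in the preceding paragraph — one segment horizontal-dragging query is issued per left-boundary segment of $WR(p_i)$ on $Side$-$H_y$, and Property \ref{WR_L1} (the three shapes in Fig.~\ref{fig:WR L1}) already asserts that this left boundary consists of at most three segments, while the right boundary is a single vertical segment extended to the line $R$ and never queried. The case analysis you flag as the remaining obstacle is exactly the content of Property \ref{WR_L1}, which the paper treats as established, so your argument is complete at the paper's own level of rigor.
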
	
	
	However, checking points is not enough. As fig. \ref{fig:edge_inclusion}, the cluster that contains $p_{1}$ and $p_{2}$ should also contain $p_{3}$, but $p_{3}$ is neither in the walking region of $p_{1}$ or $p_{2}$. The cause of the problem is that the walking region of a cluster includes not only the walking regions of its convex vertices, but also the walking regions of the line segments that connect the convex vertices.
	
\begin{figure}[H]
\begin{center} 
	\includegraphics[scale=0.7]{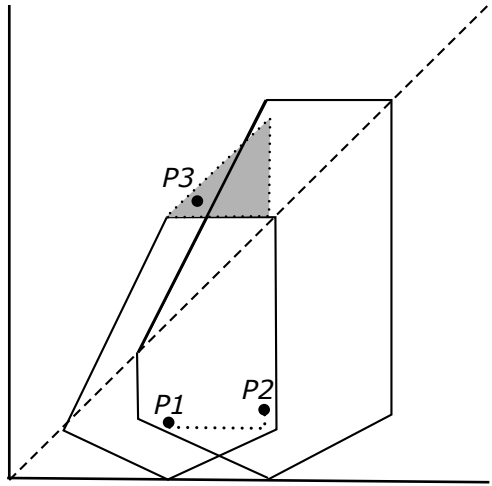} 
	\caption{Edge inclusion} \label{fig:edge_inclusion}
\end{center}
\end{figure}

	To deal with the problem, we need to check whether the walking region of the line segments(gray area in fig. \ref{fig:edge_inclusion}) causes side inclusion, which is called \textit{\textbf{edge inclusion}}. The edge of a cluster would only be formed when clusters are merged. Whenever a new edge is formed, the segment-dragging query would be invoked to check if the walking region of the edge contains any points on the other side. If the query return yes, mark the cluster which the edge belongs to. \bigskip
	
	After solving all the points on $Side-H_x$, we perform the same process for the points on $Side-H_y$.
	
	For clusters we marked above, we pick out the topmost cluster $C_t$ and the rightmost cluster $C_r$. By lemma \ref{merge approach}, merging $C_t$ and $C_r$ will also merge all the other clusters before them. Therefore, we can get the final clusters after merging $C_t$ and $C_r$. \bigskip
	
	After we obtain the final clusters, the time-convex hull could be constructed by linear scan in O($n$). The linear scan start from the last point $P_{x_n}$ on $Side-H_x$ to the fist point $P_{x_1}$, then turn to $Side-H_y$ from the first point $P_{y_1}$ to the last point $P_{y_n}$.
	
	For clusters using only one highway, details steps have been record in the work of Yu and Lee \cite{Yu} and Aloupis et al \cite{Aloupis}. By lemma \ref{merge approach}, after side inclusion, there is no more than one cluster using two highways, and the cluster must be the first cluster on both $Side-H_x$ and $Side-H_y$. Therefore, we can scan all points in the cluster linearly, and the details would be the same to the method of constructing clusters using only one highway. \bigskip
	
\begin{theorem}
	The time-convex hull with two orthogonal highways under $L_1$-metric can be computed in $O(n\log{n})$ time.
\end{theorem}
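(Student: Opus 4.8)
The plan is to prove the theorem by accounting for the total running time of the algorithm just described, broken into its three logical phases: the per-side cluster construction, the cross-side merging (side inclusion and edge inclusion), and the final linear scan. First I would invoke Corollary \ref{cluster nlogn}, which guarantees that the clusters on each of $Side\text{-}H_x$ and $Side\text{-}H_y$ can be built independently in $O(n\log n)$ time via the one-highway subroutine of Aloupis et al.\ \cite{Aloupis}. Since each point is assigned to exactly one side by Definition \ref{Side}, the two invocations touch disjoint point sets of total size $n$, so this phase costs $O(n\log n)$.

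The heart of the argument is bounding the merging phase. Here I would rely on Lemma \ref{complexity segment-dragging} and Lemma \ref{at most 3 segment-dragging}. By Property \ref{WR_L1}, the left boundary of a point's walking region consists of at most three segments, so checking whether a point on one side lies in the walking region of a cluster on the other side reduces to at most three segment-dragging queries, each costing $O(\log n)$ by Lemma \ref{complexity segment-dragging}. Summed over all $n$ points this is $O(n\log n)$. For edge inclusion, the key observation is that a new edge is created only when two clusters merge, and each merge permanently reduces the number of clusters by at least one; hence the total number of edges ever created over the whole run is $O(n)$, so the total number of segment-dragging queries triggered by edge creation is also $O(n)$, contributing a further $O(n\log n)$. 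The preprocessing for the segment-dragging data structure is $O(n\log n)$ with $O(n)$ space, again by Lemma \ref{complexity segment-dragging}.

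Finally, once all marked clusters are identified, Lemma \ref{merge approach} collapses the entire cross-side merge into a single operation: it suffices to merge the topmost marked cluster $C_t$ and the rightmost marked cluster $C_r$, which by the lemma automatically absorbs all clusters to their lower-left. Thus the last merge is $O(n)$, and Lemma \ref{merge approach} also guarantees that at most one cluster uses both highways and that it is the first cluster on each side, so the concluding linear scan that assembles the time-convex hull boundary runs in $O(n)$ time. Adding the three phases, $O(n\log n)+O(n\log n)+O(n)=O(n\log n)$, yields the claimed bound.

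The step I expect to be the main obstacle is the amortized bound on edge inclusion. It is not immediate that the total number of edges created across all merges is linear, since a single merge could in principle create several new convex-hull edges at once; the careful point is that the union of all clusters has $O(n)$ total complexity and that each merge decreases the cluster count, so a charging or potential argument is needed to show the aggregate edge count—and hence the aggregate number of triggered queries—remains $O(n)$ rather than, say, $O(n\log n)$ per merge in the worst case. Establishing this linear aggregate cleanly, while confirming that each newly formed edge triggers only a constant number of dragging queries, is where I would spend the most care.
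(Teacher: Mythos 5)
Your proposal is correct and follows essentially the same route as the paper's proof: per-side construction via Corollary \ref{cluster nlogn}, an $O(n\log n)$ bound on side/edge inclusion by combining Lemma \ref{at most 3 segment-dragging} with Lemma \ref{complexity segment-dragging} and the observation that merges (hence newly created edges) number $O(n)$ in total, followed by the $O(n)$ linear scan justified by Lemma \ref{merge approach}. The edge-counting issue you flag as the main obstacle is handled in the paper exactly as you suggest—each cluster merge creates only a constant number (the paper says one) of new hull edges and there are at most $n$ merges, so the aggregate number of edge-triggered queries is linear.
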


\begin{proof}
	By corollary \ref{cluster nlogn}, clusters on each sides could be constructed in $O(n\log{n})$ times.
	
	For a point set contains n points, by lemma \ref{at most 3 segment-dragging}, each point would evoke at most 3 segment-dragging queries. Thus, the segment-dragging query would be evoked at most $O(n)$ times for n points. Since there are at most n clusters, and each merging create only one newly-formed edge, the segment-dragging query would be evoked at most $O(n)$ times for n edges. By lemma \ref{complexity segment-dragging}, segment-dragging query takes $O(\log{n})$ times. To sum up, the side inclusion tests totally take O($n\log{n}$) times.
	
	In the last step, it takes $O(n)$ times to construct the time-convex hull by linear scan.
	
	Therefore, the time-convex hull with two orthogonal highways under $L_1$-metric can be computed in $O(n\log{n})$ time.
\end{proof}

\section{Optimal Construction in $L_2$-metric with infinite highway speed}

\hspace{4mm} 
In this section, we just care about the case of infinite highway speed. By lemma \ref{merge approach}, our approach is to find the topmost marked cluster $C_t$ and the rightmost marked cluster $C_r$ as well. For this goal, we need to construct clusters first and check where $C_t$ and $C_r$ are. Like what we do under $L_1$-metric, we process the point inclusion to every points. However, due to the different shape of walking regions between $L_1$ and $L_2$ metric, edge inclusions may also be happened during point inclusion. The method of constructing clusters with one highway under $L_2$-metric was proposed by Dai \cite{Dai}, which takes $O(n\log{n})$ time.\bigskip

\begin{figure*}
\begin{center}
\subfloat[edge inclusion on the same side]{\includegraphics[scale=0.3]{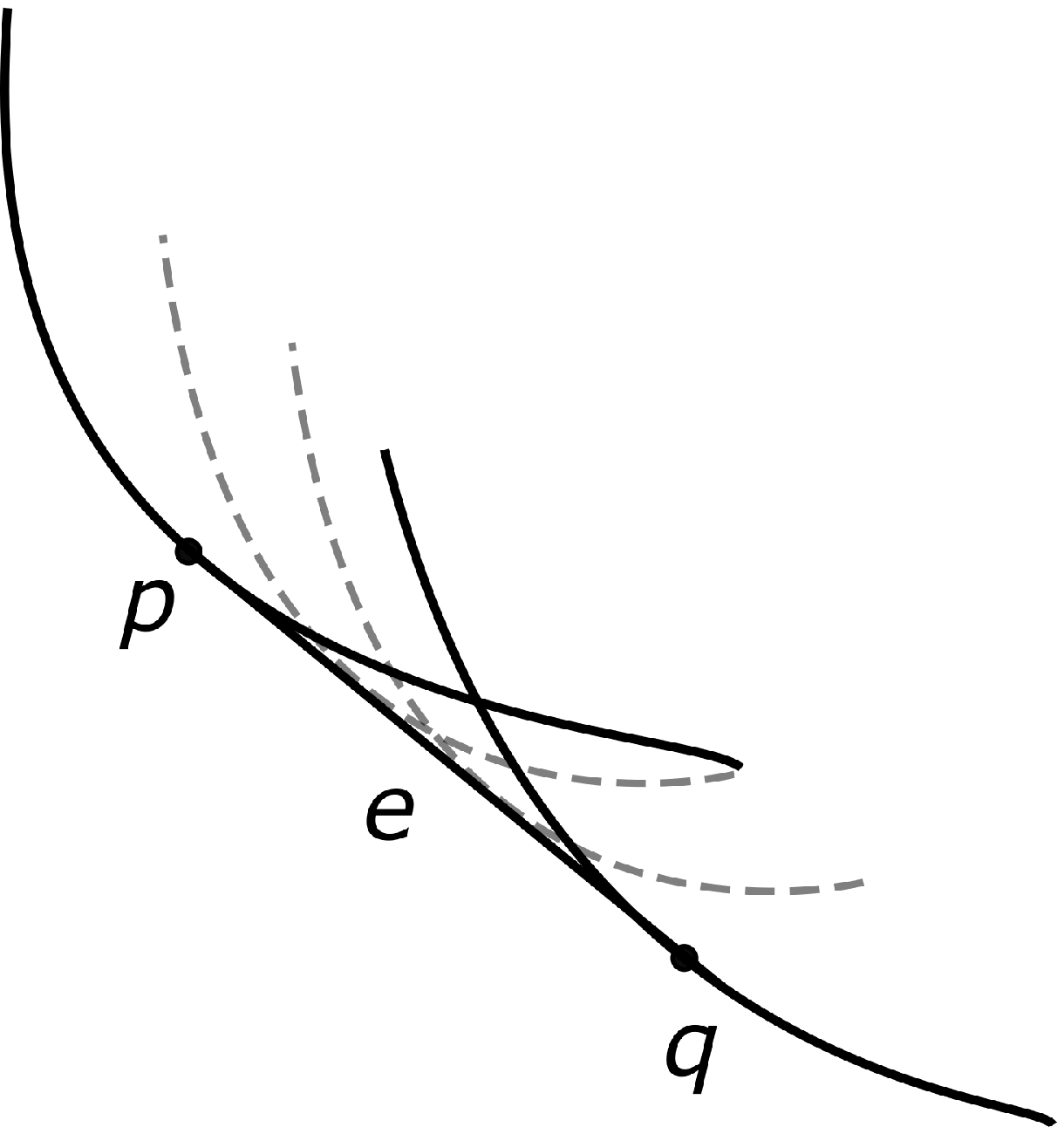}}
\hspace{4mm}
\subfloat[new-formed edge e]{\includegraphics[scale=0.3]{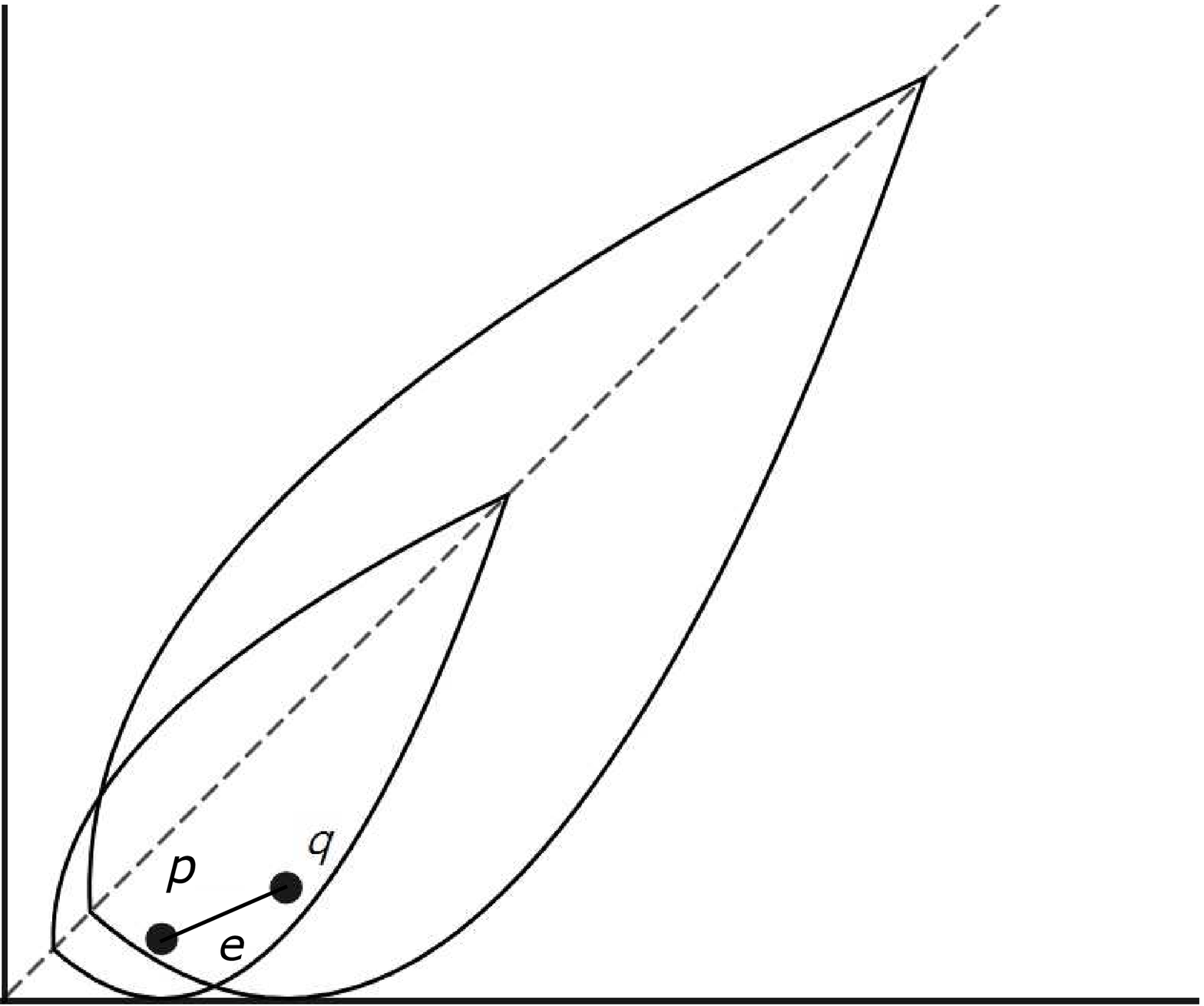}}
\hspace{4mm}
\subfloat[edge inclusion on the other side]{\includegraphics[scale=0.3]{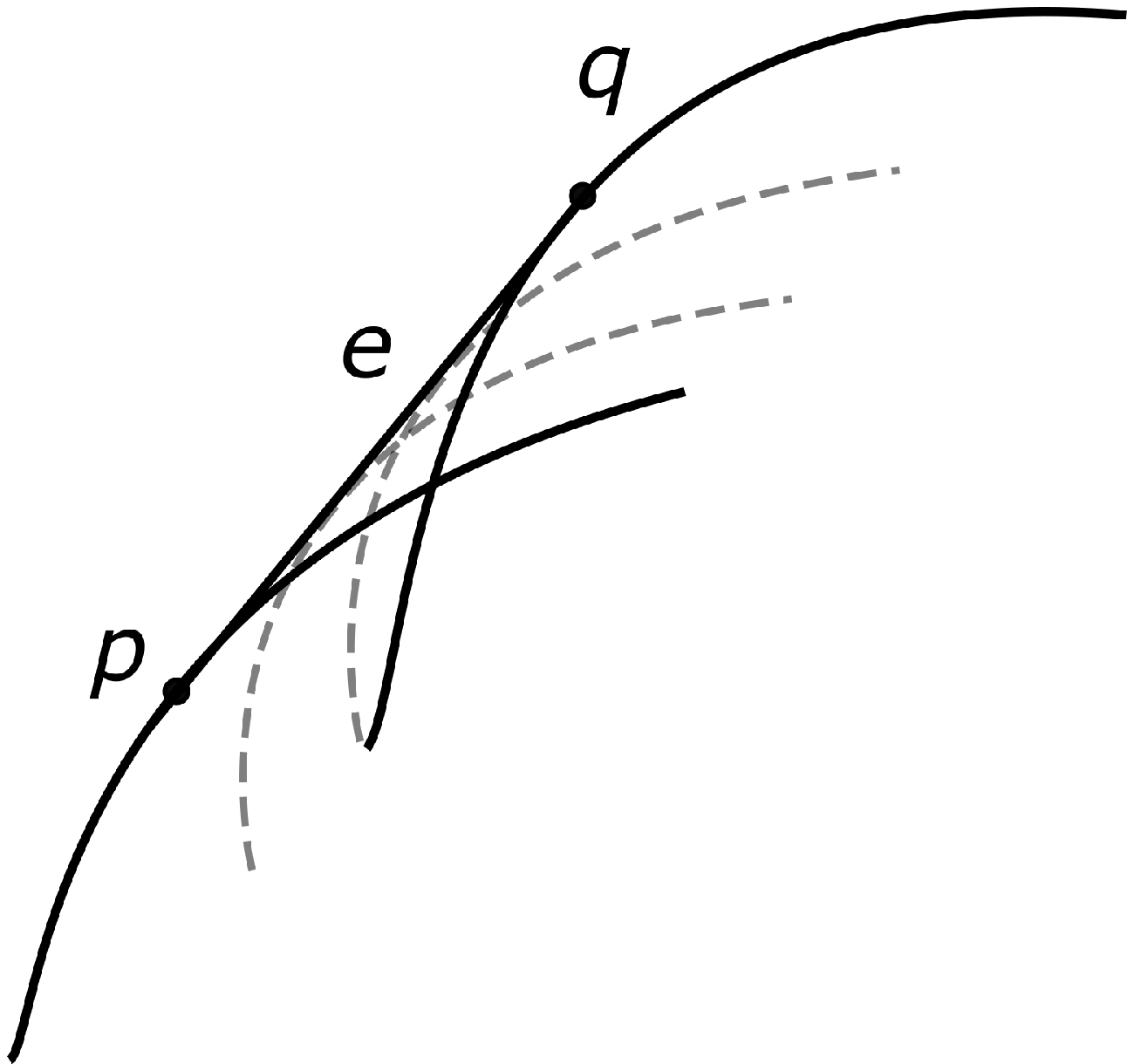}}
\caption{edge inclusion} \label{fig:edge_inclusion_L2}
\end{center}
\end{figure*}

	The edge inclusion will be happened when the clusters are merged. When the clusters are merged, a new hull edge will be created (See Fig. \ref{fig:edge_inclusion_L2}(b)), and we exploit segment vertical-dragging query for this test. The approach we used is referred to the work of Dai \cite{Dai}. Let \textbf{E} be the set of newly-formed edges. We randomly pick one edge $e$ in \textbf{E} to check if there are any points that lie in this area $WR(e)$, which means that the clusters need to be merged. Otherwise we check another edge in \textbf{E} until \textbf{E} is empty. This step just check the clusters on the same side.

	On the other side, edge inclusion may also occur(See Fig. \ref{fig:edge_inclusion_L2}(c)). Thus the walking region of edge \textbf{e} on the other side also need to be checked. We invoke segment horizontal-dragging query again to check if $WR(e)$ on the other side contain any points. If the query return yes, this cluster will be marked. \bigskip

	After the edge inclusion test of incoming point $p_i$, we need to maintain the outermost boundary of clusters on the other side, i.e. if $p_i$ is on $Side-H_x$, we would maintain the outermost boundaries of clusters on $Side-H_y$ (See Fig.\ref{fig:boundary}) . This process will be done in both side. The boundary will be stored as linked-lists which consist of two points and a curve. Below we describe how to maintain the boundary.

When a new point $q_i$ arrives, we will check the intersection between $WR(q_i)$ and $x=y$. Denoted these two points $b_j$ and $b_k$, where $b_j$'s y-coordinate is less than $b_k$'s y-coordinate :\\
1. If $b_j$'s and $b_k$'s y-coordinate are both larger than the tail of linked-list $L_k$, we can update the linked-list directly.\\
2. If $b_j$'s and $b_k$'s y-coordinate are both less than $L_k$, the walking region will be covered and we can go next point $q_{i+1}$.\\
3. After case 1 and 2, we check if the walking region of $q_i$ intersects with the boundary which is represented by $L_k$.

\hspace{4mm} If yes, update the linked list and go for next point $q_{i+1}$.

\hspace{4mm} If no, $L_k$ will be covered and never be used, we can delete it and keeping going to $L_{k-1}$ 

\hspace{4mm} to check these three cases. \bigskip

\begin{figure*}
\begin{center}
	\includegraphics[scale=0.3]{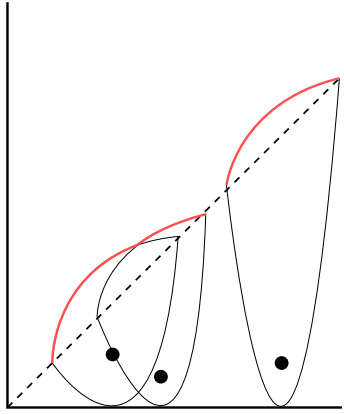} 
	\caption{The boundary of walking region of points	 in Side-$H_x$} \label{fig:boundary}
\end{center}
\end{figure*}

We construct the linked-list of points on $Side-H_x$ first. After the boundary is constructed, ray shooting could be executed to compare each point on $Side-H_y$ with the linked list which is made of walking regions from $Side-H_x$. The ray shooting would start from the last point $P_{y_n}$ of $Side-H_y$ and the tail of linked list $L_k$. \bigskip

If $L_k$ is to the left of $P_{y_n}$, it means that $P_{y_n}$ lies in the walking region of some clusters on $Side-H_x$. Therefore, we could mark the cluster that $P_{y_n}$ belongs to and stop the ray shooting on $Side-H_y$. Otherwise, when $L_k$ is not to the left of $P_{y_n}$ but to the right of it, we could make sure that $P_{y_n}$ doesn't lie in any walking region of clusters on $Side-H_x$ for the fact that the linked list is y-monotone. Thus we should search the next point $P_{y_{n-1}}$.
	
	When the y-coordinate of $P_{y_n}$ is less than $L_k$, we traverse the next curve $L_{k-1}$. Like the concept of the former, we would search next point $P_{y_{n-1}}$ if the y-coordinate of $L_k$ is less than $P_{y_n}$. The ray shooting will be executed continuously until all points on $Side-H_y$ have been searched or all curves in the linked list have been traversed.\bigskip
	
	After we have traversed every points, $C_t$ and $C_r$ could be found and we can complete the final clusters after merging them. If there are no marked cluster in either $Side-H_x$ or $Side-H_y$, it means that the cluster-merging between different sides would not happen, which is the same to the case of one highway that has been solved efficiently by Dai \cite{Dai}.\bigskip
	
	In the point inclusion test, it can be done in $O(n\log{n})$ by Dai \cite{Dai}. Furthermore, in the edge inclusion test, every cluster-merging create one hull edge at most and take $O(\log{n})$ in each test. Therefore, it takes $O(n\log{n})$ in edge inclusion test. In the process of maintain the outermost boundary, each point intersect at most two points and every nodes are just traversed at most one time, so it takes $O(n)$ time in ray shooting test. Thus, the time-convex hull with two orthogonal highways of infinite speed under $L_2$-metric can be computed in $O(n\log{n})$ time.
	
\begin{theorem}
	The time-convex hull with two orthogonal highways of infinite speed under $L_2$-metric can be computed in $O(n\log{n})$ time.
\end{theorem}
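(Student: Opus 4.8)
The plan is to bound the running time of each phase of the construction already described and verify that they sum to $O(n\log n)$, while checking that the correctness of the side-inclusion step follows from the lemmas assembled above. Since the algorithm decomposes into four phases --- per-side cluster construction, point inclusion, edge inclusion, and boundary maintenance plus ray shooting --- I would treat the theorem as an accounting argument, invoking Dai \cite{Dai} for the single-highway primitives and Lemma \ref{complexity segment-dragging} for the dragging queries.

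First I would appeal to Lemma \ref{the same side use one H}, which guarantees that two points on the same side use at most one highway, so that Corollary \ref{cluster nlogn} applies verbatim on each side; this builds the clusters on $Side-H_x$ and $Side-H_y$ independently, each in $O(n\log n)$ time using Dai's method. Next I would bound the inclusion tests. Point inclusion is exactly Dai's procedure and runs in $O(n\log n)$. For edge inclusion the key combinatorial observation is that every cluster merge creates at most one new hull edge and there are at most $n-1$ merges in total, so the set \textbf{E} of newly-formed edges has size $O(n)$; by Lemma \ref{complexity segment-dragging} each segment vertical-dragging query (and the additional horizontal-dragging query on the opposite side) costs $O(\log n)$, so all edge tests together cost $O(n\log n)$. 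Correctness here rests on Property \ref{WR_L2}: with infinite highway speed a point always enters its adjacent highway, so its walking region is bounded by only the two discriminating parabolas and each edge test reduces to a single monotone dragging query.

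The hard part will be the amortized analysis of the boundary-maintenance and ray-shooting phase, since at first glance the nested deletions look superlinear. I would store the outermost boundary on each side as a $y$-monotone linked list of parabolic arcs, and argue that inserting a new point $q_i$ meets the diagonal $x=y$ in at most two points $b_j, b_k$, so cases~1 and~2 do $O(1)$ work per point. The subtle case is~3, where a newly inserted walking region can delete a run of tail arcs $L_k, L_{k-1}, \dots$; the fix is to charge the deletion loop not to the inserting point but to the arcs it removes, and since each arc is created once and deleted at most once, the total deletion work telescopes to $O(n)$. I would analyze the ray-shooting sweep the same way: it advances monotonically through $P_{y_n}, P_{y_{n-1}}, \dots$ and through $L_k, L_{k-1}, \dots$ using the $y$-monotonicity of the list, so each point and each arc is visited at most once, giving $O(n)$ overall for this phase.

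Finally I would close the argument by applying Lemma \ref{merge approach}: once every point has been processed, merging only the topmost marked cluster $C_t$ and the rightmost marked cluster $C_r$ absorbs all remaining marked clusters to their bottom-left, so the final clusters are correct, and the hull is then assembled by a single linear scan in $O(n)$ (with at most one cluster touching both highways, which is handled exactly as in the single-highway case). Summing the phase costs $O(n\log n) + O(n\log n) + O(n\log n) + O(n) + O(n)$ yields the claimed $O(n\log n)$ bound, completing the proof.
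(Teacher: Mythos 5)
Your proposal is correct and follows essentially the same route as the paper: per-side cluster construction via Dai, point inclusion, edge inclusion charged at one new hull edge per merge with $O(\log n)$ dragging queries, and an $O(n)$ boundary-maintenance and ray-shooting phase, concluded by Lemma \ref{merge approach} and a linear scan. Your explicit amortized charging argument (each arc created once and deleted at most once) is a welcome tightening of the paper's terser claim that ``every nodes are just traversed at most one time,'' but it formalizes the same idea rather than departing from it.
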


\section{Conclusion}
\hspace{4mm} In this paper, we give an O($n\log{n}$) time algorithm for the time-convex hull with two orthogonal highways under $L_1$-metric. For $L_2$-metric, we provide O($n\log{n}$) time algorithm in the special case where the highway speed is infinite. There are some extensions of this work, e.g., $L_2$-metric with general highway speed, $L_p$-metric where 1 $\leq p \leq \infty$, or, with multiple highways.\bigskip



\bibliographystyle{plain}
\bibliography{highway-hull-$L_p$-metric}

%
%
%



\end{CJK*}
\end{document}